\newtheorem{remark}{Remark}
\newtheorem{definition}{Definition}
\newtheorem{problem}{Problem}
\newtheorem{theorem}{Theorem}
\newcommand{\ignore}[1]{}
\newcommand{\eg}{e.\,g.,\ }
\newcommand{\ie}{i.\,e.,\ }
\newcommand{\etal}{\emph{et al. }}
\DeclareFontFamily{U}{MnSymbolC}{}
\DeclareSymbolFont{MnSyC}{U}{MnSymbolC}{m}{n}
\DeclareFontShape{U}{MnSymbolC}{m}{n}{
    <-6>  MnSymbolC5
   <6-7>  MnSymbolC6
   <7-8>  MnSymbolC7
   <8-9>  MnSymbolC8
   <9-10> MnSymbolC9
  <10-12> MnSymbolC10
  <12->   MnSymbolC12%
}{}
\DeclareMathSymbol{\powerset}{\mathord}{MnSyC}{180}
\DeclarePairedDelimiter\floor{\lfloor}{\rfloor}
\newcommand{\x}{{\bf x}}
\newcommand{\uu}{{\bf u}}
\newcommand{\G}{{\bf G}}
\newcommand{\F}{{\bf F}}
\newcommand{\U}{{\bf U}}
\newcommand{\uH}{{\bf u}^H}
\newcommand{\yH}{{\bf y}^H}
\newcommand{\n}{\mathcal{N}}
\newcommand{\est}{\lambda_{\alpha_t}}
\newcommand{\pstl}{PrSTL}
\newcommand{\pneg}{\tilde{\neg}}
\newcommand{\ztrue}{p_{\text{t}}^\varphi}
\newcommand{\zfalse}{q_{\text{t}}^\varphi}
\newcommand{\ztruee}{p_{\text{t}}^\psi}
\newcommand{\zfalsee}{q_{\text{t}}^\psi}
\newcommand{\ztruei}{p_{\text{t}}^{\psi_i}}
\newcommand{\zfalsei}{q_{\text{t}}^{\psi_i}}
\begin{document}
%

\title{Safe Control under Uncertainty}

%
%
%
%
%

\numberofauthors{2} 
%
\author{
%
\alignauthor Dorsa Sadigh\\
       \affaddr{UC Berkeley}\\
       \affaddr{Berkeley, CA, USA}\\
       \email{dsadigh@berkeley.edu}
\alignauthor Ashish Kapoor\\
       \affaddr{Microsoft Research}\\
       \affaddr{Redmond, WA, USA}\\
       \email{akapoor@microsoft.com}
}

\maketitle

\begin{abstract}
Controller synthesis for hybrid systems that satisfy temporal specifications expressing various system properties is a challenging problem that has drawn the attention of many researchers. 
However, making the assumption that such temporal properties are deterministic is far from the reality. 
For example, many of the properties the controller has to satisfy are learned through machine learning techniques based on sensor input data. In this paper, we propose a new logic, Probabilistic Signal Temporal Logic (\pstl), as an expressive language to define the stochastic properties, and enforce probabilistic guarantees on them. 
We further show how to synthesize safe controllers using this logic for cyber-physical systems under the assumption that the stochastic properties are based on a set of Gaussian random variables.
One of the key distinguishing features of \pstl~is that the encoded logic is adaptive and changes as the system encounters additional data and updates its beliefs about the latent random variables that define the safety properties. 
We demonstrate our approach by synthesizing safe controllers under the~\pstl~specifications for multiple case studies including control of quadrotors and autonomous vehicles in dynamic environments.
\end{abstract}





\section{Introduction}
\label{sec:intro}
Synthesizing safe controllers for cyber-physical systems (CPS) is a challenging problem, due to various factors that include uncertainty arising from the environment. For example, any safe control strategy for quadcopters need to incorporate predictive information about wind gusts and any associated uncertainty in such predictions. Similarly, in the case of autonomous driving, the controller needs a probabilistic predictive model about the other vehicles on the road in order to avoid collisions. Without a model of uncertainty that would characterize all possible outcomes, there is no guarantee that the synthesized control will be safe.
 
The field of Machine Learning has a rich set of tools that can characterize uncertainties. Specifically, Bayesian graphical models~\cite{jordan1998learning} have been very popular in modeling uncertainties arising in scenarios common to CPS. For example, one of the common strategies in CPS is to build classifiers or predictors based on acquired sensor data. It is appealing to consider such predictive systems in synthesizing safe controllers for dynamical systems. However, it is well known that it is almost impossible to guarantee a prediction system that works perfectly all the time. Consequently, we need to devise control synthesis methodologies that are aware of such limitations imposed by the Machine Learning systems. Specifically, we need to build a framework that is capable of synthesis of safe controllers by being aware of when the prediction system would work and when it would fail.
 
In this paper, we propose a methodology for safe controller synthesis using the novel \emph{Probabilistic Signal Temporal Logic} (PrSTL) that allows us to embed predictive models and their associated uncertainties. The key ingredient of the framework is a logic specification that allows embedding of uncertainties via {\em probabilistic predicates} that take random variables as parameters. These random variables allow incorporation of Bayesian graphical models in these predicates, thereby resulting in a powerful logic specification that can reason about safety under uncertainty. One of the main advantages of using Bayesian graphical models (or Bayesian methods in general) is the fact that the predictions provided are full distributions associated with the quantity of interest as opposed to a point estimate. For example, a classical Machine Learning method might just provide a value for wind speed, however under the Bayesian paradigm we would be recovering an entire probability distribution over all possible winds. Finally, another distinguishing aspect of our framework is that these probabilistic predicates are adaptive: as the system sees more and more data, the inferred distribution over the latent variables of interest can change leading to change in the predicates themselves.
 
Previous efforts for synthesizing safe controllers either operate under deterministic environments or model uncertainty only as part of the dynamics of the system. For example, Signal Temporal Logic (STL)~\cite{maler2004monitoring} provides a framework for expressing real-valued dense-time temporal properties for safety, but assumes that the signal provided from the trajectory of the system is deterministically defined by the system dynamics. Similarly, other approaches that model uncertainty as a variable added to the dynamics~\cite{sadigh2014learning,fu2015computational,svorevnova2015temporal,fu2014integrating,fu2014probably} lack clear connections to various sources of uncertainty present in the environment. Specifically, with prior approaches there is no clear understanding of how uncertainty arising due to sensing and classification could be incorporated while reasoning about safe control trajectories.
 
In this paper we aim to alleviate these issues by defining a probabilistic logical specification framework that has the capacity to reason about safe control strategies by embedding various predictions and their associated uncertainty. Specifically, our contributions in this paper are:
\begin{itemize}
\item Formally define PrSTL, a logic for expressing probabilistic properties that can embed Bayesian graphical models.
\item Formalize a receding horizon control problem to satisfy PrSTL specifications.
\item Provide a novel solution for the controller synthesis problem using Mixed Integer Semi-Definite Programs.
\item Provide a toolbox implementing our algorithms and showcasing experiments in autonomous driving and control of quadrotors.
\end{itemize}
The rest of this paper is organized as follows: In Section~\ref{sec:related} we go over some of the related work in the area of stochastic control, and controller synthesis under safety. In Section~\ref{sec:prelim}, we discuss the preliminaries, and in Section~\ref{sec:probdef}, we define the problem statement along with the formal definition of PrSTL. Section~\ref{sec:examples} illustrates our experimental results, and we conclude in Section~\ref{sec:dis}.

\section{Related Work}
\label{sec:related}
Over the years researchers have proposed different approaches for safe control of cyber-physical systems.
For instance, designing controllers under reachability analysis is a well-studied method that allows specifying safety and reachability properties~\cite{mitchell2000level,mitchell2005time}.
More recently, safe learning approaches construct controllers that keep the system in the safe region, while the optimal strategy is learned online~\cite{gillula2012guaranteed,aswani2013provably,akametalu2014reachability}. However, finding the reachable set is computationally expensive, which makes this approach impractical for most interesting cyber-physical systems.
Controller synthesis under temporal specifications such as Linear Temporal Logic (LTL) allows expressing more interesting properties of the system and environment,~\eg safety, liveness, response, stability, etc., and has shown promising results in robotics applications~\cite{piterman2006synthesis,kress2009temporal,livingston2012backtracking,wongpiromsarn2010receding,gol2015temporal}.
However, synthesis for LTL requires time and space discretization, which again suffers from the curse of dimensionality. Although, this approach is effective at high level planning, it is unsuitable for synthesizing control inputs at the level of dynamical systems.
More recently, Raman~\etal have studied synthesis for Signal Temporal Logic (STL), which allows real-valued, dense-time properties in a receding horizon setting~\cite{raman2014model,Raman15}. 
Although, this approach requires solving mixed integer linear programs, it has shown promising results in practice.
One downside of specifying properties in STL or LTL is that the properties of the system and environment have to be expressed deterministically.
Knowledge of the exact parameters and bounds of the specification is an unrealistic assumption for most CPS applications, where the system interacts with uncertain environments, and has partial knowledge of the world based on its sensors and classifiers.

The problem of controller synthesis under uncertainty is also a well-studied topic.
One of the most effective approaches in robust control under uncertainty is modeling the environment uncertainty as part of the dynamics of the system, and finding the optimal strategy for the worst case disturbance~\cite{kothare1996robust,wang1992robust,zhou1998essentials}.
However, considering worst case environment is inapplicable and too conservative.
More recently, researchers have proposed modeling the environment in a chance constrained framework, and there are some promising results in the area of urban autonomous driving~\cite{lenz2015stochastic,vitus2012stochastic,vitus2013probabilistic,blackmore2011chance,carvalhoa2014stochastic}.
In most previous work the uncertainty from the system or environment is modeled as part of the dynamics, and there is not an intuitive connection between the properties, and the sensing and classification capabilities of the system.
In addition, there has been efforts in verification and synthesis of controllers for temporal properties given probabilistic transition systems~\cite{sadigh2014learning,fu2015computational,svorevnova2015temporal,fu2014integrating,fu2014probably,puggelli2013polynomial}.
To best of our knowledge, none of the previous studies consider scenarios, where the uncertainty and confidence in properties is originated from classifiers rather than the dynamics of the system, and is expressed as part of the specification.
In this work, we propose a more natural framework for expressing temporal and Boolean properties over different sources of uncertainty, and their interconnect, which allows synthesizing safe controllers while considering such probabilistic temporal specifications.

\section{Preliminaries}
\label{sec:prelim}
\subsection{Hybrid Dynamical System}
We consider a continuous time hybrid dynamical system:
\begin{equation}
\begin{aligned}
	&\dot{x_t} = f(x_t,u_t)\\
	&y_t = g(x_t,u_t).
\end{aligned}	
\end{equation}
Here, $x_t \in \mathcal{X} \subseteq (\mathbb{R}^{n_c} \times \{0,1\}^{n_d})$ is a signal representing the continuous and discrete mode of the system at time $t$, $u_t \in U\subseteq (\mathbb{R}^{m_c} \times \{0,1\}^{m_d})$ is the control input and $y_t \in Y\subseteq (\mathbb{R}^{p_c} \times \{0,1\}^{p_d})$ is the output of the system at time $t$.
This continuous system can be discretized using time intervals $dt>0$, and every discrete time step is $k = \floor{t/ dt}$. The discrete time hybrid dynamical system is formalized as:
\begin{equation}
\label{eq:dynamics}
\begin{aligned}
	&x_{k+1} = f_d(x_k,u_k)\\
	&y_k = g_d(x_k,u_k).
\end{aligned}	
\end{equation}
We let $x_0 \in \mathcal{X}$ denote the initial state of the dynamical system. We express an infinite \emph{run} of the system as: $\xi = (x_0,u_0),(x_1,u_1),\dotsc$. Given the initial state $x_0$, and a finite length input sequence: ${\bf u}^H = u_0, u_1, \dotsc, u_{H-1}$, the finite horizon \emph{run} or \emph{trajectory} of the system following the dynamics in equation~\eqref{eq:dynamics} is:
\begin{equation}
\xi^H(x_0,{\bf u}^H) = (x_0, u_0),(x_1,u_1), \dotsc, (x_H, u_H).
\end{equation}
Furthermore, we let $\xi(t) = (x_t,u_t)$ be a \emph{signal} consisting of the state and input of the system at time $t$; $\xi_x(t) = x_t$ is the state, and $\xi_u(t) = u_t$ is the input at time $t$.

The output of the system is also computed to be $\yH = y_0,y_1, \dotsc, y_{H-1}$.
A cost function is defined for the finite horizon trajectory, denoted by $J(\xi^H)$, and maps $\xi^H \in \Xi$, the set of all trajectories to positive real valued costs in $\mathbb{R}^+$.
\subsection{Controller Synthesis for Signal Temporal Logic}
\label{sec:stl}
\emph{Signal Temporal Logic (STL)} is an expressive framework that allows reasoning about real-valued dense-time functions, and has been largely used for defining robustness measures and monitoring properties of real-time signals of hybrid systems~\cite{maler2004monitoring,donze2010robust,donze2013efficient}. More recently there has been interest in synthesizing controllers that satisfy STL properties~\cite{raman2014model,Raman15}.

Formally, $(\xi,t) \models \varphi$ denotes that a signal $\xi$ satisfies the STL formula $\varphi$ at time $t$.
An atomic predicate of an STL formula is represented by inequalities of the form $\mu(\xi(t)) > 0$, where $\mu$ is a function of the signal $\xi$ at time $t$. The truth value of the predicate $\mu$ is equivalent to $\mu(\xi(t)) >0$.
Any STL formula consists of Boolean and temporal operations on these predicates and the syntax of STL formulae $\varphi$ is defined recursively as follows:
\begin{equation}
\varphi ::= \mu \:|\: \neg \mu \:|\: \varphi \wedge \psi \:|\: \G_{[a,b]} \psi \:|\: \varphi \: \U_{[a,b]} \psi,
\end{equation}
where $\psi$ and $\varphi$ are STL formulae, $\G$ denotes the \emph{globally} operator and $\U$ is the \emph{until} operator.
For instance, $\xi \models \G_{[a,b]} \psi$ specifies that $\psi$ must hold at all times in the given interval, $t\in [a,b]$ of signal $\xi$. We can also define $\F$ the \emph{eventually} operator, and $\F_{[a,b]} \psi =\neg \G_{[a,b]} \neg \psi$.
Satisfaction of an STL formula $\varphi$ for a signal $\xi$ at time $t$ is formally defined as follows:
\small
\begin{equation}
\label{eq:STL}
\centering
\begin{array}{lll}
(\xi,t) \models \mu &\Leftrightarrow & \mu(\xi(t)) > 0\\
(\xi,t) \models \neg \mu &\Leftrightarrow & \neg((\xi,t)\models \mu)\\
(\xi,t) \models \varphi \land \psi &\Leftrightarrow & (\xi,t) \models \varphi \land (\xi,t) \models \psi\\
(\xi,t) \models \varphi \lor \psi &\Leftrightarrow & (\xi,t) \models \varphi \lor (\xi,t) \models \psi\\
(\xi,t) \models \G_{[a,b]} \varphi &\Leftrightarrow & \forall t'\in [t+a, t+b], (\xi,t') \models \varphi\\
(\xi,t) \models \F_{[a,b]} \varphi & \Leftrightarrow & \exists t'\in [t+a, t+b], (\xi,t') \models \varphi\\
(\xi,t) \models \varphi~\U_{[a,b]}~\psi &\Leftrightarrow & \exists t' \in [t+a,t+b] \mbox{ s.t. } (\xi,t') \models \psi \\
&&\land \forall t'' \in [t,t'], (\xi,t'') \models \varphi.
\end{array}
\end{equation}
\normalsize
An STL formula $\varphi$ is \emph{bounded-time} if it contains no unbounded operators.
The \emph{bound} of a formula is defined as the maximum over the sum of all nested upperbounds on the STL formulae.

Synthesizing controllers that satisfy STL properties is a non-trivial task. Most promising approaches are based on Receding Horizon Control or Model Predictive Control (MPC) ~\cite{morari1993model} that aim to iteratively optimize a cost function $J(\xi^H)$ of interest. Specifically, starting with an initial state $x_0$, the MPC scheme aims to determine the optimal control strategy $\uH$ given the dynamics model of the system as in equation~\eqref{eq:dynamics}, while satisfying the STL formula $\varphi$. The constraints represented using STL allow expression of temporal specifications on the runs of the system and environment and limit the allowed behavior of the closed loop system ~\cite{Raman15,raman2014model}.

Prior work shows that MPC optimization with STL constraints $\varphi$  can be posed as a \emph{Mixed Integer Linear Program} (MILP)~\cite{raman2014model,griva2009linear}. It is well-known that the global optimality of this approach is not guaranteed; nonetheless, MPC is fairly used in practice, and has shown to perform well.


\subsection{Bayesian Methods to Model Uncertainty}
\label{sec:Bayes}
Probability theory provides a natural way to represent uncertainty in the environment and recent advances in Machine Learning and Perception have heavily relied on Bayesian methods to infer distributions over latent phenomenon of interest~\cite{gelman2014bayesian,jordan1998learning}. The two key ingredients include (a) Bayesian networks (equivalently graphical models) that allow expression of complex interactions between sets of latent variables and (b) the Bayesian inference procedure that numerically computes probability distributions over the variables of interest. One of the key distinguishing aspects of the Bayesian methodology is that, unlike other optimization based machine learning methods, the entire distributions over the variables of interest are available. Such distributions completely characterize the uncertainty present in the system and are crucial for our goal of synthesizing safe controllers.
 
While a thorough discussion of Bayesian networks and associated methods to model uncertainty is beyond the scope of this paper, we highlight these methods on the task of inferring classifiers from observed training data. Formally, given a set of training data points ${\bf X}_L = \{{\bf x}_1, \dotsc , {\bf x}_n\}$, with observations ${\bf t}_L = \{t_1, \dotsc , t_n\}$, where $t_i \in \{+1, -1\}$,
we are interested in finding a hyperplane ${\bf w}$ that separates the points belonging to the two classes according to $\mbox{sgn}({\bf w}^T {\bf x})$. Under the Bayesian paradigm, we are interested in the distribution:
\begin{equation}
\label{eq:bayes}
\begin{aligned}
p({\bf w} | {\bf X}_L, {\bf t}_L) &= p({\bf w})\cdot p({\bf t}_L | {\bf X}_L, {\bf w})\\
     &= p({\bf w}) \prod_{i}{p(t_i | {\bf w}, {\bf x}_i)}\\
     &= p({\bf w}) \prod_{i}{\mathbb{I}\:[\mbox{sgn}({\bf w}^T{\bf x}_i) = t_i]}.
\end{aligned}
\end{equation}
The first line in the above equation stems from the Bayes rule, and the second line simply exploits the fact that given the classifier ${\bf w}$ the labels for each of the points in the data set are independent. The expression $\mathbb{I}[\cdot]$ in the third line is an indicator function which evaluates to $1$ when the condition inside the brackets holds. Thus, equation~\eqref{eq:bayes} starts from a prior $p(\bf w)$ over the classifiers and eventually by incorporating the training data points, infers a posterior distribution over the set of all the classifiers that respect the observed labels and the points. While the above equation expresses the statistical dependencies among the various variables (\ie the model), there are various Bayesian inference techniques~\cite{minka2001family,beal2003variational,andrieu2003introduction} that would allow numerical computation of the posterior distribution $p({\bf w} | {\bf X}_L, {\bf t}_L)$ of interest. In the above case of Bayesian classifier, the popular method of choice is to use Expectation Propagation~\cite{minka2001family} to infer $p({\bf w} | {\bf X}_L, {\bf t}_L)$ as a Gaussian distribution $N({\bf w}; \bar{\bf w}, \Sigma)$. Linear application of this classifier to a data point as ${\bf w}^T{\bf x}$ results in a Gaussian distribution of the prediction with the mean ${\bf w}^T{\bf x}$ and the variance ${\bf x}^T\Sigma{\bf x}$. Similarly, for the case of Bayesian linear regression the same procedure can be followed, albeit with continuous target variables $t \in \mathbb{R} $.
 
Note that these Bayesian linear classifiers and regressors are a fairly rich class of models and have similar or better representation capabilities as kernel machines~\cite{williams2006gaussian}. In this work, we specifically aim to incorporate such rich family of classification models in safe controller synthesis.

\section{Problem Statement}
\label{sec:probdef}
We propose \emph{Probabilistic Signal Temporal Logic} (\pstl) that allows us to express uncertainty over the latent variables via probabilistic specifications. The key idea in our work is to first incorporate random variables in predicates, and then express
temporal and Boolean operations on such predicates. The proposed logic provides an expressive framework for defining safety conditions
under a wide variety of uncertainties, including the uncertainty that arises due to application of Machine Learning classifiers.

The core ingredient in this work is the insight that when the uncertainty over the random variable is reasoned out in a Bayesian framework, we can use the inferred probability distributions to efficiently derive constraints from the \pstl~specifications.
We provide a novel solution for synthesizing controllers for dynamical systems given different \pstl~properties. An interesting aspect of this framework is that the \pstl~formulae can evolve at every step. For example, a classifier associated with the dynamical system can continue to learn with time, thereby changing the inferred probability distributions on the latent random variables.

\subsection{Probabilistic Signal Temporal Logic}
\label{sec:pstl}
\pstl~supports probabilistic temporal properties on real-valued, dense-time signals. Specifically, $(\xi,t) \models \varphi$ denotes that the signal $\xi$ satisfies the \pstl~formula $\varphi$ at time $t$. We introduce the notion of a probabilistic atomic predicate $\est(\xi(t))$ of a \pstl~formula that is parameterized with a time-varying random variable $\alpha_t$ drawn from a distribution $p(\alpha_t)$ at every time step:
\begin{equation}
\label{eq:pred}
(\xi,t) \models \est^{\epsilon_t} \: \Longleftrightarrow \: P(\est(\xi(t)) < 0) > 1-\epsilon_t.
\end{equation}
Here $P(\cdot)$ represents the probability of the event and $1-\epsilon_t$ defines the \emph{tolerance} level in satisfaction of the probabilistic properties. The parameter $\epsilon_t \in[0, 1]$ is a small time-varying positive number and represents the threshold on satisfaction probability of $\est (\xi(t)) < 0$. A signal $\xi(t)$ satisfies the PrSTL predicate $\est$ with confidence $1-\epsilon_t$ if and only if:
\begin{equation}
\label{eq:marginal}
\int_{\alpha_t}{\mathbb{I}[\est(\xi(t))<0]\: p(\alpha_t)\: d\alpha_t} > 1-\epsilon_t.
\end{equation}
Here $\mathbb{I}[\cdot]$ is an indicator function, and the equation marginalizes out the random variable $\alpha_t$ with the probability density $p(\alpha_t)$. The truth value of the \pstl~predicate $\lambda_{\alpha_t}^{\epsilon_t}$ thus is equivalent to satisfaction of the probabilistic constraint in equation~\eqref{eq:pred}. We would like to point out that computing such integrals for general distributions is computationally difficult; however, there are many parameterized distributions (e.g., Gaussian and other members of the exponential family) for which there exists either a closed form solution or efficient numerical procedures.

Note that this probabilistic atomic predicate $\est(\xi(t))$ is a stochastic function of the signal $\xi$ at time $t$ and expresses a model of the uncertainty in environment based on the observed signals. As the system evolves and observes more data about the environment, the distribution over the random variable $\alpha_t$ changes over time, thereby leading to an adaptive \pstl~predicate.
The \pstl~formula consists of Boolean and temporal operations over their predicates. We formulate \pstl~in negation normal form, and recursively define the syntax of the logic as:
\begin{equation}
\label{eq:syntax}
\varphi ::=\lambda_{\alpha_t}^{\epsilon_t} \:|\: \pneg \lambda_{\alpha_t}^{\epsilon_t}  \:|\: \varphi \wedge \psi \:|\: \G_{[a,b]} \psi \:|\: \varphi \: \U_{[a,b]} \psi.
\end{equation}

Here, $\varphi$ is a \pstl~formula, which is built upon predicates $\lambda_{\alpha_t}^{\epsilon_t}$ defined in equation~\eqref{eq:pred}, \emph{propositional formulae} $\varphi$ composed of the predicates and Boolean operators such as $\wedge$ (and), $\pneg$ (negation), and \emph{temporal operators} on $\varphi$ such as $\G$ (globally), $\F$ (eventually) and $\U$ (until).
Note, that in these operations the \pstl~predicates can have different probabilistic parameters,~\ie $\alpha_t$ and $\epsilon_t$.
In addition, satisfaction of the \pstl~formulae for each of the Boolean and temporal operations based on the predicates is defined as:
\small
\begin{equation}
\label{eq:PSTL}
\begin{array}{lll}
(\xi,t) \models \lambda_{\alpha_t}^{\epsilon_t} &\!\Leftrightarrow & P(\lambda_{\alpha_t}(\xi(t)) < 0) > 1- \epsilon_t\\
(\xi,t) \models \pneg \lambda_{\alpha_t}^{\epsilon_t} &\!\Leftrightarrow & (\xi,t) \models -\lambda^{\epsilon_t}_{\alpha_t}\\
(\xi,t) \models \varphi \land \psi &\Leftrightarrow & (\xi,t) \models \varphi \land (\xi,t) \models \psi\\
(\xi,t) \models \varphi \lor \psi &\Leftrightarrow & (\xi,t) \models \varphi \lor (\xi,t) \models \psi\\
(\xi,t) \models \G_{[a,b]} \varphi &\Leftrightarrow & \forall t'\in [t+a, t+b], (\xi,t') \models \varphi\\
(\xi,t) \models \F_{[a,b]} \varphi & \Leftrightarrow & \exists t'\in [t+a, t+b], (\xi,t') \models \varphi\\
(\xi,t) \models \varphi~\U_{[a,b]}~\psi &\Leftrightarrow & \exists t' \in [t+a,t+b] \mbox{ s.t. } (\xi,t') \models \psi \\
&&\land \forall t'' \in [t,t'], (\xi,t'') \models \varphi.
\end{array}
\end{equation}
\normalsize
\begin{remark}
Note that $\pneg$ (negation) defined above, does not follow the traditional logical complement properties,~\ie  a formula and its negation can both be satisfied or violated by our definition of negation.
Satisfaction of a complement of a \pstl~formula is equivalent to negating the formula's function $-\lambda_{\alpha_t}^{\epsilon_t}$.
\end{remark}

\begin{remark}
The \pstl~framework reduces to STL, when the distribution $p(\alpha_t)$ is a Dirac distribution. A Dirac or a point distribution over $\alpha_t$ enforces $\lambda_{\alpha_t}(\xi(t))<0$ to be deterministic and equivalent to an STL predicate $\mu$ defined in Section~\ref{sec:stl}.
\end{remark}
\subsection{Controller Synthesis for Probabilistic Signal Temporal Logic}
We now formally define the controller synthesis problem in the MPC framework with \pstl~specifications.
\begin{problem}
\label{prob:1}	
Given a hybrid dynamical system as in equation~\eqref{eq:dynamics}, an initial state $x_0$, a \pstl~formula $\varphi$,
an MPC horizon $H$, and a cost function $J(\xi^H)$ defined for a finite horizon trajectory $\xi^H$ find:
\begin{equation}
\label{eq:prob1}
\begin{aligned}
	 &\underset{\uH}{\text{argmin}}
	& \quad & J(\xi^H(x_0, \uH))\\
	 &\text{subject to}
	& \quad & \xi^H(x_0,\uH) \models \varphi.
\end{aligned}	
\end{equation}
\end{problem}
Problem~\eqref{prob:1} formulates a framework for finding a control strategy $\uH$ that optimizes a given cost function, and satisfies a \pstl~formula. Finding the best strategy for this optimization given only deterministic \pstl~formulae, where $\alpha_t$ is drawn from a Dirac distribution is the same as solving a set of mixed integer linear constraints. In this section, we show how the optimization can be solved for the general case of \pstl~by translating the formula to a set of mixed integer constraints. Specifically, we provide full solution for the Gaussian distributions in Problem~\ref{prob:1}, where the optimization reduces to mixed integer semi-definite programs.
\subsubsection{Mixed Integer Constraints}
\label{sec:integer}
We first discuss how every \pstl~formula generates a set of integer constraints. Given a \pstl~formula, we introduce two integer variables for every time step $t$, $\ztrue$ and $\zfalse \in \{ 0,1 \}$, which correspond to the truth value of the \pstl~formula and its negation respectively. These variables enforce satisfaction of the \pstl~formula $\varphi$ as follows:
\begin{equation}
\begin{aligned}
&\ztrue = 1 &\implies & (\xi,t) \models \varphi \\
&\zfalse = 1 &\implies & (\xi,t) \models \pneg \varphi
\end{aligned}
\end{equation}

The formula $\varphi$ holds true if $\ztrue = 1$, and its negation $\pneg \varphi$ (defined in Section~\ref{sec:pstl}) holds true if $\zfalse = 1$.
Due to our definition of negation for probabilistic formulae, there exist signals for which $\ztrue$, and $\zfalse$ can both be set to 1, where both $\varphi$, and $\pneg \varphi$ are satisfied by the signal. This explains the construction of two integer variables for every formula. Using both integer variables, we define the constraints required for logical and temporal operations of~\pstl~on $\ztrue$ and $\zfalse$ for all times. These integer variables enforce the truth value of the formula $\varphi$, and we refer to them as \emph{truth value enforcers}:
\begin{itemize}
\item {\bf Negation} $ (\varphi = \pneg \psi): \quad \ztrue \leq \zfalsee$ and  $\zfalse \leq \ztruee$
\item {\bf Conjunction} $ (\varphi = \wedge_{i=1}^N \psi_i): \quad \ztrue \leq \ztruei$ and \\ $\zfalse \leq \sum_{i=1}^N \zfalsei$
\item {\bf Disjunction} $ (\varphi = \vee_{i=1}^N \psi_i):$ $\varphi = \pneg \wedge_{i=1}^N \pneg \psi_i$
\item {\bf Globally} $(\varphi = \G_{[a,b]}\psi):$
	\begin{eqnarray*}
		& \ztrue \leq p_{t'}^\psi & \forall t' \in [t+a, \:\min(t+b,H\!-\!1)], \\
		& \zfalse \leq \sum_{t'=t+a}^{t+b}q_{t'}^\psi & \text{(Only for $t<H-b$).}
	\end{eqnarray*}
	\label{eq:global}
\item {\bf Eventually} $(\varphi = \F_{[a,b]}\psi):$ $\varphi = \pneg\: \G_{[a,b]} \pneg\psi$.
\item {\bf Unbounded Until} $(\varphi =\psi_1 \:\tilde{\U}_{[0,\infty)}\psi_2):$\\
$\bigvee_{t=0}^{H-1} \big ((\G_{[0,t]} \psi_1) \wedge (\G_{[t,t]} \psi_2) \big )\vee \G_{[0,H-1]} \psi_1 $
\item {\bf Bounded Until} $(\varphi = \psi_1\: \U_{[a,b]}\psi_2):$ $\varphi = \G_{[0,a]}\psi_1 \wedge \F_{[a,b]}\psi_2 \wedge \G_{[a,a]}(\psi_1 \tilde{\U}_{[0,\infty)} \psi_2)$
\end{itemize}

Here, we have shown how $\ztrue$ and $\zfalse$ are defined for every logical property such as \emph{negation}, \emph{conjunction}, and \emph{disjunction}, and every temporal property such as \emph{globally}, \emph{eventually}, and \emph{until}. We use $\tilde{\U}$ to refer to unbounded until, and $\U$ for bounded until.

Note that while synthesizing controllers for \pstl~formulae in an MPC scheme, we sometimes are required to evaluate satisfaction of the formula outside of the horizon range $H$. For instance, a property $\G_{[a,b]}\varphi$ might need to be evaluated beyond $H$ for some $t'\in [t+a,t+b]$. In such cases, our proposal is to act optimistically, which means that we assume the formula holds true for the time steps outside of the horizon of \emph{globally} operator, and similarly assume the formula does not hold true for the negation of the \emph{globally} operator. This optimism is evident in formulating the truth value enforcers of the \emph{globally} operator above, and based on that, it is specified for other temporal properties.

Based on the recursive definition of \pstl, and the above encoding, the truth value enforcers of every \pstl~formula is defined using a set of integer inequalities involving a composition of the truth value enforcers of the inner predicates.
\subsubsection{Satisfaction of \pstl~predicates}
\label{sec:pred}
We have defined the \pstl~predicate $\lambda_{\alpha_t}^{\epsilon_t}$ for a general function, $\lambda_{\alpha_t}(\xi(t))$ of the signal $\xi$ at time $t$. In general, the function allows a random variable $\alpha_t \sim p(\alpha_t)$ drawn from any distribution at every time step. The general problem of controller synthesis that would satisfy the PrSTL predicates is computationally difficult due to the fact that evaluation of the predicates boils down to computing an integration depicted in equation \eqref{eq:marginal}. Consequently, in order to solve the control problem in equation~\eqref{eq:prob1} we need to enforce a structure on the predicates of $\varphi$. In this section, we explore the linear-Gaussian structure of the predicates that appear in many of the real-world cases and show how they translate into Mixed Integer SDPs.

Formally, if $\varphi = \lambda_{\alpha_t}^{\epsilon_t}$ is only a single predicate, the optimization in equation~\eqref{eq:prob1} will reduce to:
\begin{equation}
\label{eq:prob3}
\begin{aligned}
	 &\underset{\uH}{\text{argmin}}
	&\quad&  J(\xi^H(x_0, \uH))&\\
	 &\text{subject to}
	&\quad&  (\xi,t) \models \lambda_{\alpha_t}^{\epsilon_t} & \forall t \in \{0,\dotsc, H\!-\!1\}.
\end{aligned}	
\end{equation}
This optimization translates to a chance constrained problem ~\cite{ben2009robust,boyd2004convex,vitus2012stochastic,vitus2013probabilistic,lenz2015stochastic,blackmore2011chance} at every time step of the horizon, based on the definition of \pstl~predicates in equation~\eqref{eq:pred}:
\begin{equation}
\label{eq:prob4}
\begin{aligned}
	 &\underset{\uH}{\text{argmin}}
	&\quad &J(\xi^H(x_0, \uH))& \\
	 &\text{subject to}
	&\quad& P(\est(\xi(t)) < 0) > 1-\epsilon_t &\\
	&&& \forall t \in \{0,\dotsc, H\!-\!1\}.&
\end{aligned}	
\end{equation}
One of the big challenges with such chance constrained optimization is there are no guarantees that the above optimization in equation~\eqref{eq:prob4} is convex. The convexity of the problem depends on the structure of the function $\lambda_{\alpha_t}$, and the distribution $p(\alpha_t)$.

It turns out that the problem takes a particularly simple convex form when the function $\lambda_{\alpha_t}$ takes a linear-Gaussian form,~\ie the random variable $\alpha_t$ comes from a Gaussian distribution and the function itself is linear in ${\alpha_t}$:
\begin{equation}
\lambda_{\alpha_t}(\xi(t)) = {\alpha_t}^\top \xi_x(t) = {\alpha_t}^\top x_t, \quad \alpha_t \sim \n(\mu_t,\Sigma_t).
\end{equation}

It is easy to show that for this structure of $\lambda_{\alpha_t}$, where $\lambda_{\alpha_t}$ is a weighted sum of the states with Gaussian weights $\alpha_t$, the chance constrained optimization in equation~\eqref{eq:prob4} is convex~\cite{van1963minimum,kataoka1963stochastic}. Specifically, the optimization problem can be transformed to a second-order cone program (SOCP). To see this, we consider normally distributed random variable $\nu \sim \n(0,1)$, its cumulative distribution function (CDF) $\Phi$:
\begin{equation}
\Phi (z) =  \int_{-\infty}^z \frac{1}{\sqrt{2\pi}}e^{\frac{-t^2}{2}}dt.
\end{equation}
Then, the chance constrained optimization reduces to SOCP via the following derivation:
\begin{equation}
\begin{aligned}
	&P(\lambda_{\alpha_t}(\xi(t)) < 0) > 1-{\epsilon_t}\\
	&\iff P(\alpha_t^\top x_t < 0) > 1-\epsilon_t \\
	&\iff P(\nu < \frac{-\mu_t^\top x_t}{x_t^\top \Sigma_t x_t}) > 1-\epsilon_t \\
	&\iff \int_{-\infty}^{\frac{-\mu_t^\top x_t}{x_t^\top \Sigma_t x_t}} \frac{1}{\sqrt{2\pi}} e^{\frac{-t^2}{2}} dt > 1-\epsilon_t \\
	&\iff \Phi(\frac{\mu_t^\top x_t}{x_t^\top \Sigma_t x_t}) < \epsilon_t \\
	&\iff \mu_t^\top x_t - \Phi^{-1}(\epsilon_t) ||\Sigma_t^{1/2}x_t||_2  < 0
\end{aligned}	
\end{equation}
 
In this formulation, $\mu_t^\top x_t$ is the linear term, where $\mu_t$ is the mean of the random variable $\alpha_t$ at every time step, and $||\Sigma_t^{1/2}x_t||_2$ is the $l_2$-norm representing a quadratic term, where $\Sigma_t$ is the variance of $\alpha_t$. This quadratic term is scaled by $\Phi^{-1}(\epsilon_t)$, the inverse of the Normal CDF function, which is negative for small values of $\epsilon_t \leq 0.5$. Thus, every chance constraint can be reformulated as a SOCP, and as a result with a convex cost function $J(\xi^H)$, we can efficiently solve the following convex optimization for every predicate of \pstl:
\begin{equation}
\begin{aligned}
	 &\underset{\uH}{\text{minimize}}
	& \quad &J(\xi^H(x_0, \uH))&\\
	 &\text{subject to}
	&\quad  &\mu_t^\top x_t - \Phi^{-1}(\epsilon_t) ||\Sigma_t^{1/2} x_t||_2< 0 &\\
	& & &\forall t\in\{0,\dotsc, H-1\}. &
	\end{aligned}		
\end{equation}

Assuming the a linear-Gaussian form of the function, we generate the SOCP above and easily translate it to a semi-definite program (SDP) by introducing auxiliary variables~\cite{boyd2004convex}. We can use this semi-definite program that solves the problem in equation~\eqref{eq:prob3} with a single constraint $\varphi = \lambda_{\alpha_t}^{\epsilon_t}$ as a building block, and use it multiple times to handle complex \pstl~formulae. Specifically, any \pstl~formula can be decomposed to its predicates by recursively introducing integer variables that correspond to the truth value enforcers of the formula at every step as discussed in Section~\ref{sec:integer}.

We would like to point out that assuming linear-Gaussian form of the function $\lambda_{\alpha_t}$ is not too restrictive. The linear-Gaussian form subsumes the case of Bayesian linear classifiers, and consequently the framework can be applied to a wide variety of scenarios where a classification or regression function needs to estimate quantities of interest that are critical for safety. Furthermore, the framework is applicable to all random variables whose distributions exhibit unimodal behavior and aligned with the large law of numbers. Finally, for the cases of non-Gaussian random variables, there are many approximate inference procedures that can approximate the distributions as Gaussian distributions effectively.

\subsubsection{Convex Subset of \pstl}
As discussed in the previous section~\ref{sec:pred}, at the predicate level of $\varphi$, we create a chance constrained problem for predicates $\lambda_{\alpha_t}^{\epsilon_t}$. These predicates of the \pstl~formulae can be reformulated as a semi-definite program, where the predicates are over intersections of cone of positive definite matrices with affine spaces.
Semi-definite programs are special cases of convex optimization; consequently, solving Problem~\ref{prob:1}, only for \pstl~predicates is a convex optimization problem. Note that in Section~\ref{sec:integer} we introduced integer variables for temporal and Boolean operators of the \pstl~formula. Construction of such integer variables increases the complexity of Problem~\ref{prob:1}, and results in a mixed integer semi-definite program (MISDP). However, we are not always required to create integer variables for all temporal and Boolean operators. Therefore, we define \emph{Convex \pstl} as a subset of \pstl~formulae that can be solved without constructing integer variables.

\begin{definition}
Convex \pstl~is a subset of PrSTL such that it is recursively defined over the predicates by applying Boolean conjunctions, and the globally temporal operator. Satisfaction of a convex PrSTL formulae is defined as:
\begin{equation}
\label{eq:ConvexPSTL}
\begin{array}{lll}
(\xi,t) \models \lambda_{\alpha_t}^{\epsilon_t} &\!\Leftrightarrow & P(\lambda_{\alpha_t}(\xi(t)) < 0) > 1- \epsilon_t\\
(\xi,t) \models \varphi \land \psi &\Leftrightarrow & (\xi,t) \models \varphi \land (\xi,t) \models \psi\\
(\xi,t) \models \G_{[a,b]} \varphi &\Leftrightarrow & \forall t'\in [t+a, t+b], (\xi,t') \models \varphi
\end{array}
\end{equation}
\end{definition}

\begin{theorem}
\label{thm:convex}
Given a convex PrSTL formula $\varphi$, a hybrid dynamical system as in equation~\eqref{eq:dynamics}, and its initial state $x_0$. The controller synthesis problem with convex PrSTL constraints $\varphi$ defined in Problem~\ref{prob:1} is a convex program. 	
\end{theorem}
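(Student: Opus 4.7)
The plan is to proceed by structural induction on the convex PrSTL formula $\varphi$, showing that in every case the constraint $\xi^H(x_0,\uH) \models \varphi$ cuts out a convex subset of the decision variable $\uH \in U^H$, and then to combine this with convexity of the cost $J$ to conclude that Problem~\ref{prob:1} is a convex program. Throughout, I would rely on the (implicit) assumption that the hybrid dynamics in equation~\eqref{eq:dynamics} have been specialized so that $x_t$ is an affine function of $\uH$ (e.g.\ a fixed discrete mode with linear continuous dynamics), since otherwise the state trajectory itself is nonconvex in $\uH$ and no such theorem can hold.

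For the base case, I would invoke the derivation from Section~\ref{sec:pred}. The predicate constraint $(\xi,t) \models \lambda_{\alpha_t}^{\epsilon_t}$ is equivalent, under the linear-Gaussian assumption $\lambda_{\alpha_t}(\xi(t)) = \alpha_t^\top x_t$ with $\alpha_t \sim \n(\mu_t,\Sigma_t)$, to
\begin{equation*}
\mu_t^\top x_t - \Phi^{-1}(\epsilon_t)\,\|\Sigma_t^{1/2} x_t\|_2 < 0.
\end{equation*}
Since $x_t$ is affine in $\uH$, the first term is affine in $\uH$, and $\|\Sigma_t^{1/2} x_t\|_2$ is a convex function of $\uH$ (composition of a norm with an affine map). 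The coefficient $-\Phi^{-1}(\epsilon_t)$ is nonnegative whenever $\epsilon_t \leq 1/2$, which is precisely the regime of interest (confidence above $1/2$), so the left-hand side is a convex function of $\uH$ and the strict inequality defines a convex set. This is exactly an SOCP constraint, hence expressible as an SDP via the standard Schur-complement trick.

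For the inductive steps, I would handle the two remaining constructors of convex PrSTL. If $\varphi = \psi_1 \wedge \psi_2$ and each $\psi_i$ yields a convex feasible set $C_i \subseteq U^H$ by the inductive hypothesis, then the feasible set for $\varphi$ is $C_1 \cap C_2$, which is convex. If $\varphi = \G_{[a,b]} \psi$, then the semantics in equation~\eqref{eq:ConvexPSTL} give $(\xi,t) \models \varphi$ iff $(\xi,t') \models \psi$ for every $t' \in [t+a,t+b]$; over the finite MPC horizon this is a finite conjunction of convex constraints (by the inductive hypothesis applied at each such $t'$), hence again convex. No integer variables are needed because the two integer-enforcer rules for conjunction and globally from Section~\ref{sec:integer} degenerate into plain inequality chains when only the ``true'' branch is required; in particular there is no disjunctive structure that would force a mixed-integer encoding.

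Putting the pieces together, the feasible set $\{\uH : \xi^H(x_0,\uH) \models \varphi\}$ is a finite intersection of SOCP-representable convex sets, and minimizing a convex $J(\xi^H(x_0,\uH))$ over it is a convex program. The main subtlety I expect to address carefully is the restriction $\epsilon_t \le 1/2$, which is what makes $-\Phi^{-1}(\epsilon_t)$ nonnegative and thus the predicate constraint convex rather than the sublevel set of a difference-of-convex function; outside that regime the base case fails and the theorem does not extend. A secondary caveat worth stating is that the theorem implicitly restricts the hybrid dynamics so that the state-trajectory map $\uH \mapsto \xi^H(x_0,\uH)$ is affine, so that convexity in $x_t$ lifts to convexity in the true decision variable $\uH$.
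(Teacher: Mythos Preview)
Your proposal is correct and follows essentially the same route as the paper's proof: structural induction on the convex PrSTL grammar, with the predicate case handled by the SOCP reduction of Section~\ref{sec:pred}, and conjunction and $\G_{[a,b]}$ dispatched by observing that both amount to finite intersections of convex constraints. Your proof is in fact more careful than the paper's, which leaves implicit the two caveats you flag (affine state-trajectory map via linearized dynamics, and $\epsilon_t \le 1/2$ to keep $-\Phi^{-1}(\epsilon_t)$ nonnegative); the paper states these elsewhere but does not revisit them in the theorem's proof.
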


\begin{proof}
We have shown that the predicates of $\varphi$,~\ie $\lambda_{\alpha_t}^{\epsilon_t}$ create a set of convex constraints. The Boolean conjunction of convex programs are also convex; therefore, $\varphi \wedge \psi$ result in convex constraints. In addition, the \emph{globally} operator is defined as a set of finite conjunctions over its time interval: $\G_{[a,b]} \varphi = \bigwedge_{t=a}^b \varphi_t$. Thus, the \emph{globally} operator retains the convexity property of the constraints.
Consequently, Problem~\ref{prob:1}, with a convex PrSTL constraint $\varphi$ is a convex program.	
\end{proof}
Theorem~\ref{thm:convex} allows us to efficiently reduce the number of integer variables required for solving Problem~\ref{prob:1}.
We only introduce integer variables when disjunctions, eventually, or until operators appear in the \pstl~constraints.
Even when a formula is not completely part of the Convex PrSTL, integer variables are introduced only for the non-convex segments.
\begin{algorithm}
\begin{algorithmic}[1]
\Procedure{Prob. Synthesis}{$f, x_0, H,\tau, J, \varphi$}\\
\vspace{1mm}
Let $\tau = [t_1,t_2]$ is the time interval of interest.\\
\label{alg:init}
\vspace{0.5mm}
\texttt{past} $\leftarrow$ Initialize($t_1$)\\
\vspace{1mm}
\quad {\bf for }$t$ = $t_1$: $dt$: $t_2$\\
\label{alg:lin}
\vspace{0.5mm}
\quad \quad $f_{\text{lin}} = \text{linearize}(f, \xi(t))$\\
\label{alg:sense}
\vspace{0.5mm}
\quad \quad $\alpha_t \leftarrow \text{Update Distributions}(\alpha_{t-dt},\:\text{sense}(\xi_x(t))$\\
\label{alg:update}
\vspace{0.5mm}
\quad \quad $\varphi \leftarrow \varphi(\alpha_t, \epsilon_t)$ \\
\label{alg:int}
\vspace{0.5mm}
\quad \quad $\texttt{C}_\texttt{PrSTL} =  \text{MISDP}(\varphi)$\\
\vspace{0.5mm}
\quad \quad $\texttt{C} = \texttt{C}_\texttt{PrSTL} \wedge f_{\text{lin}} \wedge \:[\:\xi(t_1\cdots t-dt) = \tt{past}\:] $\\
\label{alg:opt}
\vspace{0.5mm}
\quad \quad $\uH =$ optimize$\big(J(\xi^H), \texttt{C}\big )$\\
\vspace{0.5mm}
\quad \quad $x_{t+1} = f(x_t,u_t)$\\
\vspace{0.5mm}
\quad \quad \texttt{past} $\leftarrow$ $[\texttt{past} \enskip \xi(t)]$\\
\vspace{0.5mm}
\quad {\bf end for}
\vspace{1mm}
\EndProcedure
\caption{\small{Controller Synthesis with PrSTL Formulae}}
\label{alg:1}
\end{algorithmic}
\end{algorithm}

We show our complete method of controlling dynamical systems in uncertain environments in Algorithm~\ref{alg:1}.
At the first time step $t_1$, we run an open-loop control algorithm to populate \texttt{past} in line~\ref{alg:init}.
We then run the closed-loop algorithm, finding the optimal strategy at every time step of the time interval $\tau = [t_1,t_2]$.
In the closed-loop algorithm, we linearize the dynamics at the current local state and time in line~\ref{alg:lin}, and then update the distributions over the random variables in the \pstl~formula based on new sensor data in line~\ref{alg:sense}.
Then, we update the \pstl~formulae, based on the updated distributions. If there are any other dynamic parameters that change at every time step, they can also be updated in line~\ref{alg:update}.
In line~\ref{alg:int}, we generate the mixed integer constraints in $\texttt{C}_\texttt{PrSTL}$, and then populate \texttt{C} with all the constraints including the \pstl~constraints, linearized dynamics, and enforcing the past trajectory. Note that we do not construct integer variables if the formula is in the subset of Convex PrSTL.
Then, we call the finite horizon optimization algorithm under the cost function $J(\xi^H)$, and the constraints \texttt{C} in line~\ref{alg:opt}, which provides a length $H$ strategy $\uH$. We advance the state with the first element of $\uH$, and update the history of the trajectory in~\texttt{past}.
We continue running this loop and synthesizing controllers over all time steps in interval $\tau$.

\section{Experimental Results}
\label{sec:examples}
We implemented our controller synthesis algorithm for PrSTL formulae as a Matlab toolbox, available at:\\
 \texttt{https://www.eecs.berkeley.edu/$\sim$dsadigh/PrSTL}. \\Our toolbox uses Yalmip~\cite{lofberg2004yalmip} and Gurobi~\cite{gurobi} as its optimization engine.
For all the examples we tried, the optimization computed at every step completed in less than 2 seconds on a 2.3 GHz Intel Core i7 processor with 16 GB memory.
We show some of our results for controlling quadrotors and autonomous driving under uncertain environments. 
\subsection{Quadrotor Control}
Controlling quadrotors in dynamic uncertain environments is a challenging task.
Different sources of uncertainty appear while controlling quadrotors,~\eg
uncertainty about the position of the obstacles based on classification methods, distributions over wind profiles or battery profiles, etc.
In this case study, we show how to express  properties of different models of uncertainty over time, and we find an optimal strategy under such uncertain environments.
 
We follow the derivation of the dynamics model of a quadrotor in~\cite{huang2009aerodynamics}. 
We consider a $12$ dimensional system, where the state consists of the position and velocity of the quadrotor $x,y,z$ and $\dot{x},\dot{y},\dot{z}$, as well as the Euler angles $\phi, \theta, \psi$,~\ie roll, pitch, yaw, and the angular velocities $p,q,r$. Let $\x$ be:
\begin{equation}
\label{eq:quad_state}
\x = [
x \enspace y \enspace z \enspace\dot{x} \enspace\dot{y} \enspace \dot{z}\enspace \phi \enspace\theta \enspace\psi  \enspace p \enspace q \enspace r 	
 ]^\top.
\end{equation}
The system has a $4$ dimensional control input:
 \begin{equation}
 \label{eq:quad_dynamics}
 \uu = \begin{bmatrix} u_1 & u_2 &  u_3 &  u_4\end{bmatrix}^\top,
 \end{equation}
 where $u_1$, $u_2$ and $u_3$ are the control inputs about each axis for roll, pitch and yaw respectively. $u_4$ represents the thrust input to the quadrotor in the vertical direction ($z$-axis).
The nonlinear dynamics of the system is:
\begin{equation}
\begin{aligned}
&f_1 (x,y,z) = 
	\begin{bmatrix}
		\dot{x} & 
		\dot{y} & \dot{z}
	\end{bmatrix}^\top & \\
&f_2 (\dot{x},\dot{y},\dot{z}) =
	\begin{bmatrix}
		0 & 
		0& g
			\end{bmatrix}^\top - \frac{ R_1(\dot{x},\dot{y},\dot{z})
			\begin{bmatrix}
		0 & 
		0& 0 & u_4
			\end{bmatrix}^\top}{m} &\\
			&f_3(\phi,\theta, \psi) = R_2(\dot{x},\dot{y},\dot{z})
	\begin{bmatrix}\dot{\phi}& \dot{\theta} & \dot{\psi}\end{bmatrix}^\top &\\
	&f_4 (p,q,r) = I^{-1}
	\begin{bmatrix}
		u_1 & 
		u_2& u_3
	\end{bmatrix}^\top - R_3(p,q,r) I  \begin{bmatrix}p & q & r\end{bmatrix}^\top, &
\end{aligned}
\end{equation}
 where $R_1$ and $R_2$ are rotation matrices, relating body frame and inertial frame of the quadrotor, $R_3$ is a skew-symmetric matrix, and $I$ is the inertial matrix of the rigid body. Also $g$ and $m$ denote gravity and mass of the quadrotor.
 Then the dynamics equation is:
 \begin{equation}
 f(\x,\uu) = \begin{bmatrix}
 f_1 & f_2 & f_3 & f_4	
 \end{bmatrix}^\top.
 \end{equation}.

\begin{figure}
\centering
\includegraphics[width = 0.4\textwidth]{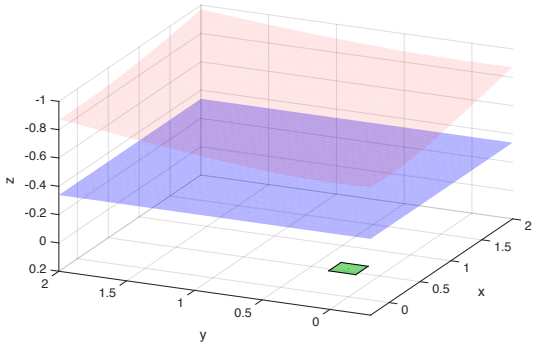}
	\caption{\small{A quadrotor starting a trajectory from the origin. The quadrotor is shown by the green square. The purple surface represents a ceiling that the quadrotor needs to avoid collisions with, and the orange surface is the quadrotors belief of where the ceiling is based on the current sensor data.}}
	\label{fig:q1}
\end{figure}

\subsubsection{Control in an Uncertain Environments}
Our first goal is for a quadrotor to reach a point in the space while avoiding obstacles. This is shown in Figure~\ref{fig:q1}, where the quadrotor is shown by a green square at its starting position, the origin $(0,0,0)$, and its objective is to reach the coordinates $(1,1,0)$ smoothly. If we let $z=0$ represent the ground level, the objective of the quadrotor is to take off and travel a distance, and then land on the ground again. Note that we use the convention, where $z <0$ is above the ground level.  We optimize the following objective:
\begin{equation}
J(\xi^H) = \sum_{t=0}^{H-1} || (x_t,y_t,z_t) - (1,1,0)||_2^2 + c||(\phi_t,\theta_t, \psi_t)||_2^2.	
\end{equation}
Here, we penalize the $l_2$-norm of the Euler angles by a factor of $c$, since we look for a smooth trajectory. We chose $c = 2$ in our examples. 
In addition to initializing the state and control input at zero, we need to satisfy the following deterministic \pstl~formulae:
\begin{equation}
\label{eq:quad_const}
\begin{aligned}
&\varphi_{\text{roll}} = \G_{[0,\infty)}(||u_1|| \leq 0.3)& \text{Bounds on Roll Input}& \\
&\varphi_{\text{pitch}} = \G_{[0,\infty)}(||u_2|| \leq 0.3)&\text{Bounds on Pitch Input}& \\
&\varphi_{\text{thrust}} = \G_{[0,\infty)}(||u_4|| \leq 0.3)&\text{Bounds on Thrust}&
\end{aligned}	
\end{equation}
In Figure~\ref{fig:q1}, the purple surface is a ceiling that the quadrotor should not collide with as it is taking off and landing at the final position.
However, the quadrotor does not have a full knowledge of where the ceiling is exactly located.
We define a sensing mechanism for the quadrotor, which consists of a meshgrid of points around the body of the quadrotor. As the system moves in the space, a Bayesian binary classifier is updated by providing a single label $-1$ (no obstacles present) or $1$ (obstacle present) for each of the sensed points.
 
The Bayesian classifier is the same as the Gaussian Process based method as described in Section~\ref{sec:Bayes} and has the linear-Gaussian form. Applying this classifier results in a Gaussian distribution for every point in the $3$D-space.
We define our classifier with confidence $1-\epsilon_t = 0.95$, as the stochastic function $\lambda^{0.05}_{\alpha_t} (\xi(t))= \alpha_t^\top [x_t \enskip y_t \enskip z_t]$, where $x_t$,$y_t$, and $z_t$ define the coordinates of the sensing points in the space, and $\alpha_t \sim \n(\mu_t,\Sigma_t)$ is the Gaussian weight inferred over time using the sensed data.
So, we define a time-varying probabilistic constraint that needs to be held at every time step as its value changes over time.
Our constraint specifies that given a classifier based on the sensing points parameterized by $\alpha_t$, we would enforce the quadrotor to stay within a safe region (defined by the classifier) with probability $1-\epsilon_t$, for $\epsilon_t = 0.05$ at all times. Thus the probabilistic formula is:
\begin{equation}
\begin{aligned}
&\varphi_{\text{classifier}} = \quad \G_{[0.1,\infty)}(\lambda^{0.05}_{\alpha_t}) \quad \text{ which is equivalent to:}& \\
&\varphi_{\text{classifier}}= \quad \G_{[0.1,\infty)}\big(P(\alpha_t^\top[x_t\enskip y_t \enskip z_t] < 0)>0.95\big)& 
\end{aligned}	
\end{equation}
We enforce this probabilistic predicate at all times in $t \in [0.1,\infty)$, which verifies the property starting from a small time after the initial state, so the quadrotor has gathered some sensor data.
In Figure~\ref{fig:q1}, the orange surface, represents the second order cone created based on $\varphi_{\text{classifier}}$, at every time step. This surface is characterized by:
\begin{equation}
\mu_t^\top \begin{bmatrix}x_t& y_t& z_t\end{bmatrix} - \Phi^{-1}(0.05) ||\Sigma_t^{1/2} \begin{bmatrix}x_t& y_t& z_t\end{bmatrix}||_2	 < 0
\end{equation}
Note that the surface shown in Figure~\ref{fig:q1}, at the initial time step is not an accurate estimate of where the ceiling is, and it is based on a distribution learned from the initial values of the sensors.
Thus, if the quadrotor was supposed to follow this estimate without updating, it would collide with the ceiling, since the orange surface showing the belief of the location of the ceiling is above the purple surface representing the real position of the ceiling.
However, the Bayesian inference running at every step of the optimization, updates the distribution over the classifier.
As shown in Figure~\ref{fig:q2}, the orange surfaces changes at every time step, since the parameters of the learned random variable $\alpha_t$, which are $\mu_t$, and $\Sigma_t$ are updated at every step.
In Figure~\ref{fig:q2}, the blue path represents the trajectory the quadrotor has already taken, and the dotted green line represents the future planned trajectory based on the current state of the classifier.
The dotted green trajectory at the initial state goes through the ceiling, since the belief of the location of the ceiling is incorrect; however, the trajectory is modified at every step as the classifier values are updated, and the quadrotor safely reaches the final position.
We solve the optimization using our toolbox, with $dt = 0.03$, and horizon length of $H = 20$. 
We emphasize that some of the constraints are time-varying, and we need to update them at every step of the optimization. We similarly update the dynamics at every time step, since we locally linearize the dynamics around the current position of the quadrotor at every step.
 \begin{figure}
 \centering
 	\includegraphics[width = 0.42\textwidth]{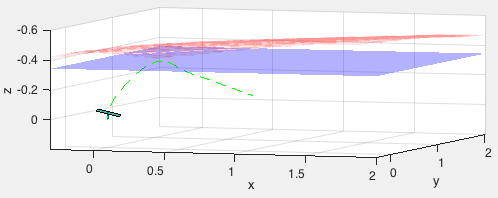}
 	\includegraphics[width = 0.42\textwidth]{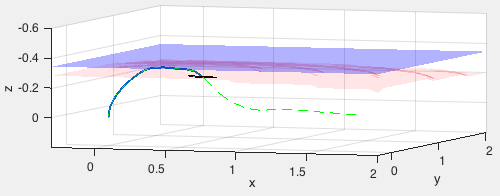}
 	\includegraphics[width = 0.42\textwidth]{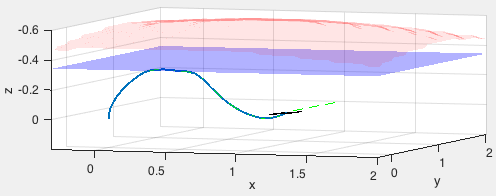}
\caption{\small{The quadrotor in Fig.~\ref{fig:q1} taking the optimal trajectory to reach the goal, while avoiding collisions with the ceiling. The figures from the top correspond to $t=0.18$ s, $t=0.63$ s, and $t=1.02$ s.}}
\label{fig:q2}
 \end{figure}
\subsubsection{Control under Battery Constraints}
We consider another scenario for controlling quadrotors, where we add a battery state to the state space of the system discussed above. So the quadrotor will be a $13$ dimensional system, where the first $12$ states follow the same order and dynamics of equations~\eqref{eq:quad_state} and~\eqref{eq:quad_dynamics}. We let $\x_t(13) = b_t$ denote the battery state, and we initialize it at $b_0 = 10$. The state of $b_t$ evolves with the negative thrust value:
\begin{equation}
f_5(b) = -|u_4|.	
\end{equation}
So the dynamics is $f = [f_1 \enskip f_2 \enskip f_3 \enskip f_4 \enskip f_5]^\top$, and all the other states and inputs are initialized at zero.
We enforce the same constraints as in equation~\eqref{eq:quad_const}, and the objective of the quadrotor is to reach the coordinates $(1,1,-0.9)$ smoothly, which corresponds to flying from the origin to the top diagonal corner of the space.
Furthermore, we impose that the quadrotor can fly above a specific height only if it is confident in its battery power.
The formula $\F_{[0,0.3]} (z_t \leq -0.1)$ encodes that eventually in the next $0.3$ s, the quadrotor will fly above a threshold level of $-0.1$.
Therefore, the truth of this formula should imply that the system is confident in the battery level, and consequently can make it to the goal position safely.
However, we assume we don't have access to the exact value of battery state due to uncertain environment factors that can affect the battery level such as radio communication, etc.
We use a stochastic linear classifier $\lambda_{\alpha_t}(\begin{bmatrix}b_t & 1\end{bmatrix}^\top)$ on a battery state augmented with value $1$, to estimate the belief on the battery level. We allow the battery state to vary with a variance $\sigma^2$ scaled at every time step of the horizon. 
So the formula ensuring that the quadrotor flies above a threshold only if its battery level is high enough is:
\begin{equation}
\begin{aligned}
\label{eq:battery}
\varphi_{\text{battery}} = \G_{[0,\infty)} \big ( \F_{[0,0.3]} (z_t \leq -0.1) \rightarrow \psi) \big),\\
\text{where} \quad
\psi =  \G_{[0,\infty)}\big(P( \alpha_t^\top\begin{bmatrix}b_t\\1\end{bmatrix}<0)\geq 1-\epsilon_t \big)\\
\alpha_t \sim \n(\begin{bmatrix}-1\\b_{\text{min}}\end{bmatrix}, \begin{bmatrix} 0& 0\\0&t\sigma^2 \end{bmatrix}),\quad \text{and}\quad \epsilon_t = 0.2.
\end{aligned}
\end{equation}
We let the confidence $1-\epsilon_t = 0.8$. The property $\psi$ can be reformulated as:
\begin{equation}
	\quad \psi = \G_{[0,\infty)}\big(P(b_t + \n(0,t\sigma^2)\geq b_{\text{min}})\geq 0.8 \big)
\end{equation}
Here $\nu \sim \n(0,1)$, and $t$ ranges over the horizon time steps.
So, $\psi$ illustrates that the quadrotor has to be confident that its battery state perturbed by a time-varying variance is above $b_{\text{min}}$ at all times.  
Therefore, $\varphi_{\text{battery}}$ specifies that if the battery state is below some threshold, the quadrotor has to fly close to the ground.
We synthesize a controller for the specifications, and the trajectory of the quadrotor is shown in Figure~\ref{fig:BAT}.
The trajectory in Figure~\ref{fig:battery} corresponds to when $\sigma = 0$,~\ie the battery state changes deterministically, and Figure~\ref{fig:battery-noise}, corresponds to $\sigma = 10$, when the quadrotor is more cautious about the state of the battery.
So the trajectory does not pass the $-0.1$ height level whenever the confidence in the battery level is below $0.8$.
\begin{figure}
\begin{subfigure}[b]{0.23\textwidth}
\centering
	\includegraphics[width = \textwidth]{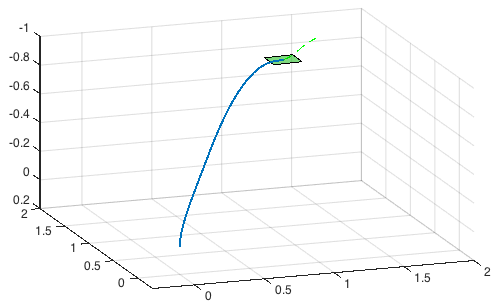}
	\caption{\small{Quadrotor with a deterministic battery state, reaching the goal, $\sigma = 0$ in equation~\eqref{eq:battery}.}}
	\label{fig:battery}
\end{subfigure}%
\hspace{10pt}
\begin{subfigure}[b]{0.23\textwidth}
\centering
	\includegraphics[width = \textwidth]{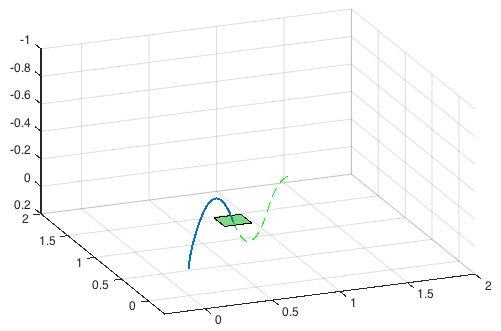}
	\caption{\small{Quadrotor with low confidence on the battery state, $\sigma = 10$ in equation~\eqref{eq:battery}. It avoids flying higher than $z=-0.1$.}}
	\label{fig:battery-noise}
\end{subfigure}
\caption{\small{Quadrotor flying to reach a goal while being confident that its battery is above some threshold.}}
\label{fig:BAT}
\end{figure}
\subsection{Autonomous Driving}
In our second case study, we consider an autonomous driving scenario. We use a simple point-mass model to define the dynamics of the vehicles on the road. We let the state of the system be $\x = [x \enskip y \enskip \theta \enskip v]^\top$, where $x$, $y$ denote the coordinates of the vehicle, $\theta$ is the heading, and $v$ is the speed. We let $\uu = [u_1 \enskip u_2]^\top$ be the control inputs, where $u_1$ is the steering input, and $u_2$ is the acceleration. Further, we write the dynamics of the vehicle as:
\begin{equation}
\label{eq:car_dynamics}
\begin{aligned}
	&\dot{x} =  v\cos(\theta)\\
	&\dot{y} = v\sin(\theta) \\
	&\dot{\theta} = \frac{v}{m} u_1\\
	&\dot{v} = u_2
\end{aligned}	
\end{equation}
\begin{figure}
\begin{subfigure}[b]{0.23\textwidth}
\centering
	\includegraphics[width = \textwidth]{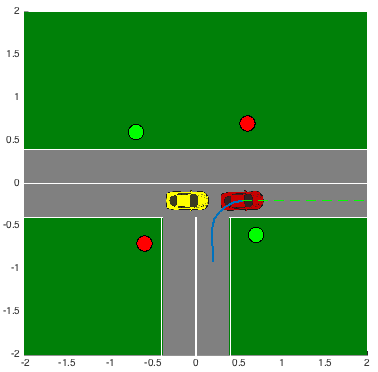}
	\caption{\small{Turning right with a deterministic model of the yellow car's velocity, $\sigma = 0$ in equation~\eqref{eq:crash}.}}
	\label{fig:car}
\end{subfigure}%
\hspace{10pt}
\begin{subfigure}[b]{0.23\textwidth}
\centering
	\includegraphics[width = \textwidth]{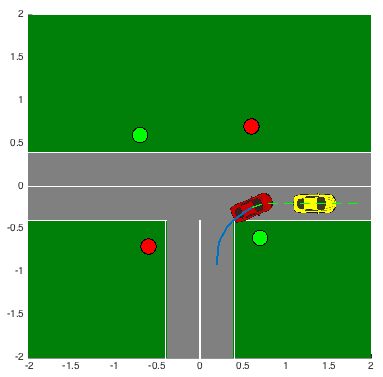}
	\caption{\small{Turning right given a probabilistic model of the yellow car's velocity, $\sigma=0.4$ in equation~\eqref{eq:crash}.}}
	\label{fig:car-noise}
\end{subfigure}
\caption{\small{Red car making a right turn at a signalized intersection. The strategy computed performs a safer trajectory with a probabilistic model of the environment, where it first waits for the yellow car to pass.}}
\label{fig:CAR}
\end{figure}
Figure~\ref{fig:CAR}, shows a scenario for an autonomous vehicle making a right turn at a signalized intersection.
We refer to the red car as the \emph{ego} vehicle,~\ie the vehicle we control autonomously, and the yellow car as the \emph{environment} vehicle.
We would like to find a strategy for the ego vehicle, so it makes a safe right turn when the traffic light is red, while yielding to the oncoming traffic.
The yellow car in the figure represents the oncoming traffic at this intersection.
In this example, the ego vehicle only has a probabilistic model of the velocity of the environment car. 
All vehicles in this example follow the same dynamics as in equation~\eqref{eq:car_dynamics}.
We refer to the states of the the ego vehicle as: $[x_{\text{}} \enskip y_{\text{}} \enskip \theta_{\text{}} \enskip v_{\text{}}]^\top$, and the states of the environment vehicle as: $[x^{\text{env}} \enskip y^{\text{env}} \enskip \theta^{\text{env}} \enskip v^{\text{env}}]^\top$. 
While synthesizing a strategy for the red car, we would enforce a set of \pstl~specifications: 
(i) We enforce bounds on control inputs and states of the two vehicles, (ii) We encode the states and transitions of the traffic light, and enforce the vehicles to obey the traffic rules, (iii) We enforce that all vehicles remain within their road boundaries.
In addition, we would like the two cars to avoid any collisions. We define collision avoidance as the following PrSTL property:
\begin{equation}
\label{eq:crash}
\begin{aligned}
&\varphi_{\text{crash}} = 	\G_{[0, \infty)}\Big(&\\
& \quad \enskip P \big(x_t - (x_t^{\text{env}}+ s_{t,x}t) \geq \delta \big) \enskip \geq 1-\epsilon_t &\\
&\vee \enskip P \big(x_t - (x_t^{\text{env}}+ s_{t,x}t) \leq -\delta \big) \enskip \geq 1-\epsilon_t &\\
&\vee \enskip P \big(y_t - (y_t^{\text{env}}+ s_{t,y}t) \geq \delta \big) \enskip \geq 1-\epsilon_t & \\
&\vee \enskip P \big(y_t - (y_t^{\text{env}}+ s_{t,y}t) \leq -\delta \big) \enskip \geq 1-\epsilon_t &\Big) 
\end{aligned}
\end{equation}
\begin{equation*}
\begin{aligned}
\begin{bmatrix}
s_{t,x}\\
s_{t,y}	
\end{bmatrix} =\n(v_t^{\text{env}},\sigma^2)
\begin{bmatrix}
\cos(\theta_t^{\text{env}})\\
\sin(\theta_t^{\text{env}}	)
\end{bmatrix} \quad \text{and} \quad \delta = 0.4, \enskip \epsilon_t = 0.2.
\end{aligned}	
\end{equation*}
Here, $\varphi_{\text{crash}}$ consists of a global operator at all times over the disjunction of four PrSTL predicates. Each probabilistic predicate encodes possible crash between the two vehicles.
In equation~\eqref{eq:crash}, $\delta$ represents the minimum distance between the $x$ and $y$ coordinates of the two vehicles in either direction,
 which generates the four disjunctions on the predicates.
The estimate of the distance between $x$ and $y$ coordinates of the two vehicles is encoded in each predicate, by considering the difference between the coordinates of the ego vehicle, and the propagated coordinates of the environment vehicle based on the value of its velocity. 
The velocity is a vector of Gaussian random variables $\begin{bmatrix}
s_{t,x}&
s_{t,y}	
\end{bmatrix}^\top$ computed based on the current heading of the environment vehicle $\theta_t^{\text{env}}$, centered at the current speed $v^\text{env}_t$, and perturbed by a variance $\sigma^2$.
The predicates in $\varphi_{\text{crash}}$ define a linear classifier on the signal representing the coordinates of the ego vehicle, parameterized by a random variable characterizing the velocity of the environment vehicle.
These predicates can easily be reformulated to the nominal structure of a PrSTL predicate $\lambda_{\alpha_t}^{\epsilon_t}$. However, we leave them as in equation~\eqref{eq:crash} for better illustration.

In the autonomous driving example, we use a sampling time of $dt = 0.1$ s, and horizon of $H=20$. In addition, we let $\sigma = 0.4$.
We successfully synthesize a strategy for the autonomous vehicle by solving Problem~\ref{prob:1}, and following the steps in Algorithm~\ref{alg:1}. The trajectory generated using this strategy is shown by the solid blue line in Figure~\ref{fig:CAR}. The dotted green line is the future trajectory computed by the MPC scheme.
In Figure~\ref{fig:car}, the ego vehicle has a deterministic model of the environment vehicle as $\sigma=0$; therefore, it performs the right turn before letting the environment vehicle pass.
However, as shown in Figure~\ref{fig:car-noise}, given $\sigma=0.4$, and $\epsilon_t = 0.2$, the ego vehicle is not confident enough in avoiding collisions, so it acts in a conservative manner and waits for the environment car to pass first, and then performs its right turn safely. 

\section{Conclusion}
\label{sec:dis}
We have presented a framework for safe controller synthesis under uncertainty. The key contributions include defining PrSTL, a logic for expressing probabilistic properties that allows embedding Bayesian graphical models. We also show how to synthesize control in a receding horizon framework under PrSTL specifications that express Bayesian linear classifiers. Another distinguishing aspect of this work is that the resulting logic adapts as more data is observed with the evolution of the system. We demonstrate the effectiveness of the approach by synthesizing safe strategies for a quadrotor and an autonomous vehicle traveling in uncertain environments.
 
The presented approach extends easily to distributions other than Gaussians via Bayesian approximate inference techniques~\cite{minka2001family,beal2003variational} that can project distributions to the Gaussian densities. Future work includes, extending controller synthesis for arbitrary distributions via sampling based approaches; we are also exploring using the proposed framework as a building block for complex robotic tasks that need to invoke higher level planning algorithms.
\newpage

%

\bibliographystyle{abbrv}
\small{
\bibliography{refs}
}
\end{document}